\renewcommand{\paragraph}[1]{{\vskip2mm\noindent \bf #1}\hspace{0.4cm}}
\definecolor{purple}{RGB}{153,50,204}
\newcommand{\ccb}{\cellcolor{black!30}}
\newcommand{\ccbb}{\cellcolor{black!40}}
\newcommand{\ccr}{\rowcolor{black!10}}
\newacronym{mcsp}{MCSP}{Minimum Common String Partition Problem}
\newacronym{dd}{DD}{Decision Diagram}
\newacronym{aco}{ACO}{Ant Colony Optimization}
\newacronym{tresea}{TRESEA}{Probabilistic Tree Search}
\newacronym{ilp}{ILP}{Integer Linear Programming}
\newacronym{dp}{DP}{Dynamic Program}
\newtheorem{theorem}{Theorem}
\newtheorem{example}{Example}
\begin{document}

\title{On Solving the Minimum Common String Partition Problem by Decision Diagrams}

\author[]{Milo\v{s} Chrom\'{y}\thanks{milos.charomy@jku.at}}
\author[]{Markus Sinnl\thanks{markus.sinnl@jku.at}}

\affil[]{Institute of Production and Logistics Management/JKU Business School, Johannes Kepler University Linz, Linz, Austria}

\date{}

\maketitle

\begin{abstract}
In the \acrfull{mcsp}, we are given two strings on input, and we want to partition both into the same collection of substrings, minimizing the number of the substrings in the partition. This combinatorial optimization problem has applications in computational biology and is NP-hard. Many different heuristic and exact methods exist for this problem, such as a Greedy approach, \acrlong{aco}, or \acrlong{ilp}. In this paper, we formulate the \acrshort{mcsp} as a \acrlong{dp} and develop an exact solution algorithm based on \acrlong{dd}s for it. We also introduce a restricted \acrlong{dd} that allows to compute heuristic solutions to the \acrshort{mcsp} and compare the quality of solution and runtime on instances from literature with existing approaches. Our approach scales well and is suitable for heuristic solution of large-scale instances.
\end{abstract}

\section{\uppercase{Introduction}}
A string is a sequence of symbols such as letters of the English alphabet, numbers, or nucleotides ($\mathbf{ACGT}$) forming a DNA sequence. Many optimization problems related to strings are widespread in bioinformatics, such as the Far-from Most String Problem~\cite{meneses_optimization_2005,mousavi_improved_2012}, the Longest Common Subsequence Problem and its variants~\cite{hsu_computing_1984,smith_identification_1981}, and Sequence Alignment Problems~\cite{gusfield_algorithms_1997}. In this paper we focus on the~\emph{\acrfull{mcsp}}. 
In the \acrshort{mcsp} we are given two or more related input strings, where related means they contain the same symbols. A solution is the partitioning of each input string into the same collection of substrings. The objective is to minimize the size of the obtained collection.

\begin{example}
\label{ex:mcsp}
Consider the \acrshort{mcsp} on two DNA sequences $s_1 = \mathbf{GAGACTA}$ and $s_2=\mathbf{AACTGAG}$. Obviously, $s_1$ and $s_2$ are related because $\mathbf{A}$ appears three times in both input strings, $\mathbf{G}$ appears twice in both input strings, and $\mathbf{C}$ and $\mathbf{T}$ appear only once. A trivial valid solution can be obtained by partitioning both strings into substrings of length one  $P_1 = P_2 = \{\mathbf{A, A, A, C, T, G, G}\}$. The objective function value of this solution is 7. However, the optimal solution with objective function value 3 is $P_1 = P_2 = \{\mathbf{A, ACT, GAG}\}$.
\end{example}

The \acrshort{mcsp} is closely related to the \emph{Problem of Sorting by Reversals with Duplicates}, a key problem in genome rearrangement~\cite{chen_assignment_2005}.

The work of \cite{goldstein_minimum_2005} proved the NP-hardness of the \acrshort{mcsp}. Different heuristics for the \acrshort{mcsp} were introduced, such as a \emph{Greedy approach}~\cite{chrobak_greedy_2004,he_novel_2007}, \emph{\acrfull{tresea}}~\cite{blum_iterative_2014}, \emph{\acrfull{aco}}~\cite{ferdous_solving_2017}. Moreover, also various exact approaches based on \emph{\acrfull{ilp}} models~\cite{blum_minimum_2020,blum_mathematical_2015,blum_construct_2016,blum_computational_2016} were proposed.

\paragraph{Our contribution.}
In this work, we develop a \emph{\acrfull{dd}} approach to the \acrshort{mcsp}.
A \acrshort{dd} can be used for computing the exact solution of a chosen optimization problem. However, NP-hard problems such as the \acrshort{mcsp} could have an exponentially large \acrshort{dd} representation. By \emph{relaxing} and \emph{restricting} \acrshort{dd}s, lower and upper bounds for the objective function value of a optimization problem can be obtained~\cite{bergman_decision_2016}. Such \acrshort{dd} approaches were  already used for various combinatorial problems such as Graph Coloring, Maximal Independent Set, MaxSAT~\cite{bergman_decision_2016,van_hoeve_graph_2020}, and various sequence and string problems such as Repetition-Free Longest Common Subsequence~\cite{horn_use_2020}, Multiple Sequence Alignment~\cite{hosseininasab_exact_2021}, or Constraint-Based Sequential Patter Mining~\cite{hosseininasab_constraint-based_2019}.
\acrshort{dd}s are also suitable for a hybrid approach with \acrshort{ilp} solvers combined using Machine Learning methods~\cite{gonzalez_integrated_2020,tjandraatmadja_decision_2018}.

This paper introduces an exact and a restricted \acrshort{dd} for the \acrshort{mcsp}. 
%
We compare their performance for solving the \acrshort{mcsp} with existing results achieved by~\acrshort{aco}, \acrshort{tresea}, and \acrshort{ilp} models on the dataset  used in~\cite{blum_mathematical_2015}, \cite{blum_iterative_2014,blum_construct_2016} and~\cite{ferdous_solving_2017}. 

Our goal is to motivate further research by comparing the \acrshort{dd} approach with existing heuristic approaches for the \acrshort{mcsp}. We show that our restricted \acrshort{dd} allows computation of better objective function values than \acrshort{aco}~\cite{ferdous_solving_2017} for most instances, and better objective function values than \acrshort{tresea}~\cite{blum_iterative_2014} for large instances. Moreover, we show that our restricted \acrshort{dd} approach is much faster than any of the currently applied methods (\acrshort{aco}, \acrshort{tresea}, \acrshort{ilp}~\cite{blum_mathematical_2015,blum_iterative_2014,blum_construct_2016,blum_computational_2016,ferdous_solving_2017}). 


\paragraph{Outline of the paper.}

The paper is organized as follows. In the Section~\ref{sec:def}, we go through basic definitions used in the paper. In Section~\ref{sec:ddmodel}, we introduce a \emph{\acrfull{dp}} formulation of the \acrshort{mcsp}. This formulation allows us to define exact and restricted \acrshort{dd}s for the \acrshort{mcsp}, which we also do in this section. We describe the main implementation details in the Section~\ref{sec:ddmodel}. In Section~\ref{sec:experiments}, we present our experimental results, and in the last Section~\ref{sec:concl}, we summarize the results and set up goals for future research.

\section{\uppercase{Definitions}}
\label{sec:def}
First, we define a notion of a Minimization Problem, the \acrlong{dp} and \acrlong{dd} for a Minimization Problem, following the definitions given in \cite{bergman_decision_2016,tjandraatmadja_decision_2018}. Finally, we also formally define the \acrlong{mcsp}. 

\subsection{Minimization Problem}
We consider the Boolean \emph{Minimization Problem} $P$ over $n$ Boolean variables, where we try to minimize the function value $f(x)$ according to the constraints $C_i(x)$, $i=1,\ldots,m$ and $x\in\{0,1\}^n$. Constraints $C_i(x)$ state an arbitrary relation between two or more variables. A \emph{feasible solution} to $P$ is an assignment $x\in\{0,1\}^n$ satisfying all constraint $C_i(x)$, $i=1,\ldots,m$. The set $Sol(P)$ is the set of all feasible solutions to $P$.
A feasible solution $x^*\in Sol(P)$ is \emph{optimal} for $P$ if $f(x^*)\leq f(x)$ for all $x\in Sol(P)$. The set $Opt(P)$ is the set of all optimal solutions of problem $P$.

\subsection{\acrlong{dp}}
A \emph{\acrfull{dp}} for a given problem $P$ with $n$ Boolean variables consists of a state space $S$ with $n+1$ stages, a partial transition function $t_j$, and a transition cost function $h_j$. 
\begin{itemize}
\item
The \emph{state space S} is partitioned into sets for each of the $n+1$ stages; i.e., $S$ is the union of the sets $S_0,\ldots,S_{n}$, where $S_0$ contains only the \emph{root state} $\hat{r}$, and $S_{n}$ contains only the \emph{terminal state} $\hat{t}$.
\item 
The partial \emph{transition functions $t_j$} defines how the decisions govern the transition between states. Note that this functions is not necessarily defined for each state and decision, however each state has defined a transition for at least one decision. 
\item
The \emph{transition cost functions }$h_j$ is defined for each transition defined by $t_j$ and gives the cost for taking this transition.
To account for objective function constants, we also consider a \emph{root value $v_r$} which is constant that will be added to the transition costs directed out of the root state.
\end{itemize}

The \acrshort{dp} formulation has variables $$(s,x)=(\hat{r}=s^0,\ldots,s^n=\hat{t},x_0,\ldots, x_{n-1}).$$ 
The objective function value we try to minimize is $$\hat{f}=v_r+\sum_{j=0}^{n-1}h_j(s^j,x_j)$$
with  $$s^{j+1}=t_j(s^j,x_j)\quad s^j\in S_j, j=0,\ldots,n-1$$ 
for a defined transition function $t_j$. 

This formulation is \emph{valid} for a problem $P$ if, for every $x\in Sol(P)$ there is an $s\in S_0\times S_1\times \ldots \times S_n$ such that $(s,x)$ is feasible and $\hat{f}(s,x)=f(x)$.

\subsection{\acrlong{dd}s}
A \emph{\acrfull{dd}} $D=(U,A,h)$ is a arc-weighted layered directed acyclic multigraph with node set $U$, arc set $A$, weight function $h:A\rightarrow \mathbb{R}$, and arc labels $0$ and $1$. The node set is partitioned into layers $L_0,\ldots,L_n$, where layer $L_0$ contains only root node $r$ and $L_n$ contains only terminal $\mathbf{T}$. The \emph{width} of the layer is the number of the nodes it contains.  We also consider a weight $v_r\in\mathbb{R}$ of the root node $r$ for problem specific constants. Each node on layer $L_j$ is associated with a Boolean variable $x_j$.
Each arc $a\in A$ is directed from a node $n_j\in L_j$ to a node $n_{j+1}$ in the next layer $L_{j+1}$ and has a label $0$ for a $0$-arc or $1$ for $1$-arc that represents assignment a value $0$ or $1$ to a variable $x_j$. The node $n_j[x_j=0]$ denotes the endnode of the $0$-arc with a start node $n_j$, and the node $n_j[x_j=1]$ denotes the endnode of $1$-arc with a start node $n_j$. Every arc-specified path $p=(a^0,a^1,\ldots,a^{n-1})$ from $r$ to $\mathbf{T}$ encodes an assignment to the variables $x_0,\ldots,x_{n-1}$, namely $x_j=a^j$, $j=0,\ldots,n-1$. The weight of such path is $h(p)=v_r + \sum_{j=0}^{n-1}h(a^j)$. 
The set of $r$-$\mathbf{T}$ paths of $D$ represents the set of assignments $Sol(D)$. The set of all minimum weighted $r$-$\mathbf{T}$ paths of $D$ represents the set of optimal assignments $Opt(D)$.

 A \acrshort{dd} $D$ is \emph{exact} for problem $P$ if $Sol(D) = Sol(P)$. The benefit of using exact \acrshort{dd}s for representing solutions is that equivalent nodes, i.e. nodes with the same set of completions, can be merged. A \acrshort{dd} is called \emph{reduced} if no two nodes in a layer are equivalent. A key property is that for a given fixed variable ordering, there exists a unique reduced \acrshort{dd}. Nonetheless, even reduced \acrshort{dd}s may be exponentially large to represent all solutions for a given problem.

A \acrshort{dd} $D$ is \emph{restricted} for problem $P$ if $Sol(D) \subseteq Sol(P)$ and $h(x_D^*)\geq f(x_P^*)$ for $x_D^*\in Opt(D)$ and $x_P^*\in Opt(P)$. 
Such a restricted \acrshort{dd} can give us a heuristic solution of a minimization problem $P$ in short computing time and with manageable memory requirements.

\subsection{\acrlong{mcsp}}
An \emph{alphabet} $\Sigma=\{a_1,a_2,\ldots,a_{|\Sigma|}\}$ is a finite set of \emph{symbols}.
A \emph{string} $s$ is a finite sequence of symbols from $\Sigma$. The \emph{length} $n$ of a string $s$ is the number of symbols contained in $s$. A \emph{substring}  $s[i:j], 0\leq i<j\leq n$ denotes a substring of $s$ consisting of symbols of $s$ starting at index $i$ and ending on index $j-1$. The term $1^k$($0^k$) denotes a string of $k$ repeating ones (zeroes).

Two strings $s_1,s_2$ are \emph{related} iff each symbol appears the same number of times in each of them. 
A valid solution to the \acrshort{mcsp} is a partitioning of $s_1$ and $s_2$ into multisets $P_1$ and $P_2$ of non-overlapping substrings, such that $P_1=P_2$. The value of the solution to the \acrshort{mcsp} is the size of partitioning $|P_1|=|P_2|$ and the goal is to find a solution with the minimal value.

\section{\uppercase{Modeling the minimum common string partition problem AS DECISION DIAGRAM}}
\label{sec:ddmodel}

First, we describe the \acrshort{dp} formulation of the \acrshort{mcsp}. We then use the \acrshort{dp} formulation to describe the exact \acrshort{dd} formulation of the \acrshort{mcsp}. Lastly we describe a restricted \acrshort{dd} for the \acrshort{mcsp}.

\subsection{\acrshort{mcsp} as a \acrlong{dp}}
\label{ssec:dpmodel}
Consider the \acrshort{mcsp} on two related strings $s_1,s_2$ of length $n$. A \emph{block} $b_i$ is a tuple $(k_i^1,k_i^2,t_i)$ such that $k_i^1,k_i^2\in\{ 0,\ldots, n-t_i\}$, $t_i\in\{1,\ldots, n\}$ and substrings $s_1[k_i^1:k_i^1+t_i]$ and $s_2[k_i^2:k_i^2+t_i]$ are equal, i.e. contain same symbols in same order. Two blocks $b_i$, $b_j$ \emph{overlap} if at least one of following holds
 \begin{enumerate}
     \item $k_i^1 - t_j < k_j^1 < k_i^1 + t_i$ or
     \item $k_i^2 - t_j < k_j^2 < k_i^2 + t_i$.
 \end{enumerate}
 This means, that both $b_i$ and $b_j$ are associated with at least one same position $\ell=\{1,\ldots,n\}$ in the input strings $s_1$ and $s_2$. Otherwise the blocks do \emph{not overlap}.
 
 \begin{example}
 Consider the \acrshort{mcsp} on two DNA sequences $s_1 = \mathbf{\widehat{GA\underline{G}}\widetilde{\underline{A}CT}A}$ and $s_2=\mathbf{A\widetilde{ACT}\widehat{\underline{G}\underline{A}G}}$ as in Example~\ref{ex:mcsp}. We consider blocks $b_i=(2,5,2)$ associated with the string $\mathbf{\underline{GA}}$, $b_j=(3,1,3)$ associated with the string $\mathbf{\widetilde{ACT}}$, and block $b_\ell=(0,4,3)$ associated with the string $\mathbf{\widehat{GAG}}$. String $b_i$ and $b_j$ overlap only at position $3$ in the first string $s_1$. Blocks $b_i$ and $b_\ell$ overlap in both strings in positions $s_1[2]$ and $s_2[5:7]$. The blocks $b_j$ and $b_\ell$ does not overlap.
 \end{example}

Let us define the set of all blocks of length at least two as $B$ with associated index set $I$, i.e., each index $(k^1,k^2,t_b) \in I$ corresponds to a block in $B$. 

The variables of our \acrshort{dp} formulation of the \acrshort{mcsp} are indexed by the index set $I_\lambda=I \cup \{\lambda\}$ where $I$ is the index set of our blocks and $\lambda\not\in I$ is a special index. The set of variables is defined as $X=\{x_i |i\in I_\lambda\}$. Each variable $x_i$, $i\in I$ is associated with the corresponding block and the variable $x_\lambda$ is a variable, which is used in the last step in the \acrshort{dp} to indicate the covering of the remaining uncovered symbols by blocks of size one. The constrains of the \acrshort{mcsp} are defined as
$\neg (x_i\wedge x_j)$ for each $i,j$ such that blocks $b_i$, $b_j$ overlap. 

A state of our \acrshort{dp} is defined by a tuple of bitsrings $(bs^1,bs^2)$, where $bs^1,bs^2\in\{0,1\}^n$. Let $B_\ell \subseteq B$ be the set of all blocks previously considered when we are in state $\ell$ of our \acrshort{dp}.
Let $bs^1[j]$ indicate the $j$-th position in $bs^1$. The value of $bs^1[j]$ is zero if and only if the position $j$ in $s_1$ is contained in a block $b_i\in B_\ell$ where we chose $x_i=1$. Similarly $bs^2[j]=0$ if and only if position $s_2[j]$ is contained in any block $b_i\in B_\ell$ where we chose $x_i=1$.

The root state is $(1^n,1^n)$ as no position is yet covered and the terminal state is $(0^n,0^n)$ as all positions have to be covered in a feasible solution.

The transition function for a state $(bs^1,bs^2)$ and a variable $x_i$, $i\in I_\lambda$ is

\begin{itemize}
    \item $(bs^1,bs^2)$ for $i\in I$, $x_i=0$, with the cost 0.
    \item $(bs_{new}^1,bs_{new}^2)$ where $c=0,1$ $bs_{new}^c[j_c]=0$ for all $k_i^c\leq j_c < k_i^c+t_i$ and $bs_{new}^c[j_c]=bs^c[j_c]$ otherwise. The cost of this transition is $1-t_i$. This transition is defined only if $i\in I$, $x_i=1$ and $bs^c[j_c]= 1$ for $k_i^c\leq j_c < k_i^c+t_i$ for both $c=1,2$, i.e 
    block $b_i$ does not overlap with any block $b_j\in B_i$ where we choose $x_j=1$.
    Otherwise the transition is undefined. 
    \item $(0^n,0^n)$ for $x_\lambda=1$. This transition cost is 0.
    \item For $x_\lambda=0$, the transition is undefined.
\end{itemize} 

The cost of root state $v_r=n$, where $n$ is the length of input string $|s_1|=|s_2|$. 
The cost of transition $t((bs_i^1,bs_i^2),x_i=1)$, $i\in I$ ``saves'' us $t_i$ blocks of length 1 and uses one block $b_i$ of length $t_i$ instead. 

\subsection{An exact \acrlong{dd} for the \acrshort{mcsp}}
\label{ssec:exactdd}

Now we can describe the formulation of the \acrshort{mcsp} as an exact \acrshort{dd} using the \acrshort{dp} formulation. 

For each state $(bs^1,bs^2)$ on the stage $S_i$, $i\in I_\lambda$, we will create a node $n_i$ in the \acrshort{dd} on layer $L_i$ associated with the variable $x_i$. The 0-arc and 1-arc connecting node $n_i$ with $n_i[x_i=0]$ and $n_i[x_i=1]$ are defined by the transition function of state $(bs^1,bs^2)$ and the weight of each arc is defined by the cost function of transition in the \acrshort{dp} model. The last layer contains only one terminal node $\mathbf{T}$ representing the terminal state $\hat{t}$ with no arcs leaving the terminal node. The weight of the root node $r$ is the same as the cost $v_r$ of the root state $\hat{r}$ of the \acrshort{dp} formulation.

\begin{theorem}
\emph{\acrshort{dd}} $D$ is an exact \emph{\acrshort{dd}} of the \emph{\acrshort{mcsp}} and $Opt(D)=Opt(\emph{\acrshort{mcsp}})$. 
\end{theorem}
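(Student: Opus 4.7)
The plan is to establish a value-preserving bijection between the feasible solutions of the \acrshort{mcsp} (in the sense of partitions of $s_1,s_2$) and the $r$--$\mathbf{T}$ paths of $D$, and then show that the path weight equals the partition size. Since $D$ is built directly from the \acrshort{dp} by taking one node per reachable state and one arc per defined transition, it suffices to prove the analogous statement for the \acrshort{dp}: the formulation is \emph{valid} for the \acrshort{mcsp}, after which exactness of $D$ and $Opt(D)=Opt(\acrshort{mcsp})$ follow from standard results of \cite{bergman_decision_2016}.

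First I would prove the \textbf{MCSP $\to$ path} direction. Given a partition $P_1=P_2$ of the \acrshort{mcsp}, separate its parts into those of length $\ge 2$ (each of which, together with its occurrences in $s_1$ and $s_2$, determines a unique index $i\in I$) and those of length $1$. Process the variables in the order fixed by the \acrshort{dp} and set $x_i=1$ exactly for the length-$\ge 2$ parts, $x_i=0$ otherwise, and finally $x_\lambda=1$. Non-overlap of the chosen blocks follows from the parts being non-overlapping substrings of $s_1$ and $s_2$, so every required transition is defined; the final $x_\lambda$-arc legally takes us to $(0^n,0^n)=\hat t$. A direct bookkeeping argument gives the cost: starting from $v_r=n$, each chosen block contributes $1-t_i$, and the $x_\lambda$-arc contributes $0$, for a total of
\[
n+\sum_{i:\,x_i=1}(1-t_i)=\bigl(n-\textstyle\sum_{i:\,x_i=1}t_i\bigr)+\#\{i:x_i=1\},
\]
which is exactly the number of length-$1$ parts plus the number of longer parts, i.e.\ $|P_1|$.

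Next I would prove the \textbf{path $\to$ MCSP} direction. An $r$--$\mathbf{T}$ path corresponds to a sequence of defined transitions terminating at $(0^n,0^n)$. The indices $i\in I$ with $x_i=1$ yield blocks that are pairwise non-overlapping (this is forced by the precondition $bs^c[j_c]=1$ for all positions of the block at the moment the $1$-transition is taken). These blocks form matching substrings in $s_1$ and $s_2$. The remaining positions of $s_1$ and of $s_2$ are those corresponding to coordinates left at value $1$ just before the $x_\lambda$-transition; because $s_1$ and $s_2$ are related and the chosen blocks remove identical multisets of symbols from each, the two residual multisets of single symbols coincide, so they can be paired to form length-$1$ parts and produce a valid \acrshort{mcsp} partition. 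The same arithmetic as above shows that the path weight equals the resulting partition size.

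Combining these two directions shows $Sol(D)$ and $Sol(\acrshort{mcsp})$ are in weight-preserving bijection, so $D$ is an exact \acrshort{dd} for the \acrshort{mcsp} and the sets of minimum-weight paths and of optimal partitions coincide, giving $Opt(D)=Opt(\acrshort{mcsp})$. The main obstacle I anticipate is the careful treatment of the $x_\lambda$ step: one has to argue that the uncovered positions in $s_1$ and $s_2$ at the moment $x_\lambda=1$ is taken always match symbol-for-symbol, which requires invoking the relatedness of $s_1,s_2$ together with the invariant that each chosen block removes the same multiset of symbols from both bitstrings. Everything else is a routine translation between the \acrshort{dp} semantics and the layered graph $D$.
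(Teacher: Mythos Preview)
Your proposal is correct and follows essentially the same approach as the paper's proof: both establish a value-preserving correspondence between $r$--$\mathbf{T}$ paths and \acrshort{mcsp} partitions by handling each direction separately and computing the path weight as $n+\sum_{i:x_i=1}(1-t_i)$. Your treatment is slightly more careful than the paper's in explicitly invoking relatedness of $s_1,s_2$ to justify that the residual single symbols match after the $x_\lambda$-step, and you construct the MCSP-to-path direction directly whereas the paper argues it by contradiction, but these are presentational differences rather than substantive ones.
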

\begin{proof}
First, we prove that for every solution of \acrshort{mcsp} exists exactly one $r$-$\mathbf{T}$ path in the \acrshort{dd} $D$ with the same weight as the objective value of the original solution. The optimality equivalence $Opt(D)=Opt(\text{\acrshort{mcsp}})$ immediately follows.

Let us have a set  of variables on an $r$-$\mathbf{T}$ path $p$ set to 1, $T=\{x_i|(n_i,n_{i+1})\in p$ is a $1$-arc, $i\in I \}$. Let us take any $x_j\in T$ with a block $b_j=(k_j^1,k_j^2,t_j)$. Now for any variable $x_i\in T$ such that $b_i\in B_j$ a block $b_i=(k_i^1,k_i^2,t_i)$ neither of overlapping conditions 
 \begin{enumerate}
     \item $k_i^1 - t_j < k_j^1 < k_i^1 + t_i$ or
     \item $k_i^2 - t_j < k_j^2 < k_i^2 + t_i$
 \end{enumerate}
holds.
Otherwise, the block $b_j$  overlaps with some other block already included in a partial partition, and hence such arc does not exist as the transition in the \acrshort{dp} defining the 1-arc is not defined. The $r$-$n_{|I|}$ path for any node $n_{|I|}$ on the last layer $L_{|I|}$ of $D$ represents a valid partial partition of both input strings and the $1$-arc from node $n_{|I|}$ works as a shortcut which includes all remaining uncovered symbols in the final partition using single symbol blocks. The weight of such path is $n + |T| -\sum_{x_i\in T} t_i$, which corresponds exactly to the number of blocks used in the cover defined by any $r$-$\mathbf{T}$ path as each block $b_i=(k_i^1,k_i^2,t_i)$ ``saves'' us $t_i$ blocks of the length $1$ and uses one of the length $t_i$ instead.

For other direction, let us suppose that a partition of input strings, with set of blocks bigger than one $B=\{b_j| j\in J\}$ for some $J\subseteq I$, does not correspond to any $r$-$n_{|I|}$ path in \acrshort{dd} $D$. Let us construct longest $r$-$n_{j_m}$ path $p$ for some $j_m\in J$ such that 0-arc from $n_i$ is in the path $p$, $i<j_m$ and $i\not\in J$, i.e. block $i$ is not used in the partition and 1-arc if $i<j_m$ and $i\in J$. Obviously $j_m\in J$ and $n_{j_m}$ has no outgoing 1-arc. However, the block $b_{j_m}$ is part of partition and any block $b_i$, $i\in J$ does not overlap with the block $b_{j_m}$ hence, the transition for a state connected to the node $n_{j_m}$ and value $x_{j_m}=1$ is defined which implies existence of outgoing 1-arc by which we get the contradiction.

As we have shown in the first part of our proof, the weight of any $r$-$\mathbf{T}$ path equals the size of the partition and therefore $Opt(D)=Opt(\text{\acrshort{mcsp}})$.
\end{proof}

\subsection{An restricted \acrlong{dd} for the \acrshort{mcsp}}
We get a restricted \acrshort{dd} $D^\prime$ by restricting the width of each layer of the original \acrshort{dd} $D$ by a given bound $W\in\mathbb{N}$. To achieve this bound we simply remove nodes with respect to a given criterion. In our implementation of $D'$, we delete nodes $n_i$ with the highest weight $r$-$n_i$ path $p$. This weight corresponds to a partition, which uses all variables $x_i$ such that for $n_i$ path $p$ chooses $1$-arc together with the number of ones in $bs_i^1$ or $bs_i^2$. By such restriction remove some $r$-$\mathbf{T}$ paths from $D$ and we do not add any new $r$-$\mathbf{T}$ paths. Moreover, we do not change the weight function $h$ of the $D$. Hence, $Sol(D^\prime)\subseteq Sol(D) = Sol(\acrshort{mcsp})$ and $x_{D^\prime}^* \geq x_D^* = x_{\acrshort{mcsp}}^*$ for $x_{D^\prime}^*\in Opt(D^\prime),x_D^*\in Opt (D)$ and $ x_{\acrshort{mcsp}}^*\in Opt(\acrshort{mcsp})$ follows.

\paragraph{Construction of $D'$.}
Algorithm~\ref{alg:restdd} describes the construction of $D'$ for two strings $s_1,s_2$ of length $n$.

\begin{algorithm}[!htb]
\DontPrintSemicolon
\LinesNumbered
\KwResult{The upper bound of the optimal objective function value of the \acrshort{mcsp}.}
\KwData{Input strings $s_1,s_2$. \\ \phantom{Data: } Layer size restriction $W$.}
 $X\leftarrow$ Compute blocks of ($s_1$, $s_2$)\;
 Sort($X$)\;
 $r\leftarrow$ node with $bs_r^1=bs_r^2=1^n$\;
 $L_0\leftarrow \{r\}$\;
 $c(r)\leftarrow v_r$\;
 \For{$i=0,\ldots,|X|-1$}{
  \For{$n_i\in L_i$}{
    \If{$n_i[x_i=0]\not\in L_{i+1}$}{
        Add $n_{i+1}=n_i[x_i=0]$ to $L_{i+1}$\;
    }
    Update weight $c(n_i[x_i=0])$ by $c(n_i)$\;
    \If{1-arc $(n_i,n_i[x_i=1])$ is defined}{
        \If{$n_i[x_i=1]\not\in L_{i+1}$}{
            Add $n_{i+1}=n_i[x_i=1]$ to $L_{i+1}$\;
        }
        Update weight $c(n_i[x_i=1])$ by $c(n_i)-t_i+1$\;
    }
  }
  Restrict $L_{i+1}$ by $W$
 }
 \For{$n_i\in L_{|X|}$}{
    Update weight $c(n_i[x_i=1]=\mathbf{T})$ by $c(n_i)$\;
  }
  \KwRet{$c(\mathbf{T})$}
 \caption{Computing the \acrshort{mcsp} upper bound of the optimal objective function value by constructing the restricted \acrshort{dd}.}
 \label{alg:restdd}
\end{algorithm}

 First we collect all blocks $b_i=(k_i^1,k_i^2,t_i)$ of length $t_i>1$. The number of blocks is $O(n^3)$. 

Next, we sort blocks. The ordering is given by the size $t_i$ of a corresponding block $b_i$ in descending order, i.e. $x_i<x_j$ iff $t_i>t_j$. The variable $x_0$ associated with the root node corresponds to a maximal block.

Then we construct $D'$ layer by layer in a BFS manner. The transition function $t_i$  of the \acrshort{dp} formulation described in Section~\ref{ssec:dpmodel} defines arcs and states of new nodes $n_i[x_i=c]$, $c=0,1$.

If the nodes $n_i[x_i=0]$ or $n_i[x_i=1]$ are defined and do not appear in layer $L_{i+1}$ yet, we add them (lines 8-9, 12-13). By storing the layer $L_{i+1}$ as a hash table, we get the complexity of search and modification in the time of hashing of the node state, which is linear with the size of the representation of a node $n_i$ $O(n=|bs_i^1|=|bs_i^2|)$. We cache only the last full layer and newly constructed layer, to save memory consumption. Therefore, for each node $n_i$ we save the minimal weight $r$-$n_i$ path. The weight changes only in steps on lines 10 and 14 only if it improves the current weight. This update is a constant time operation.

We want to keep only the best $W$ nodes on any layer, i.e. the $W$ nodes with the minimal weight $r$-$n_i$ path, which we have saved for each node. First we find the $W$ minimal-weight nodes on the newly constructed layer in a linear time $O(W)$ as the newly constructed layer size is at most $2W$. Then we select the $W$ minimal-weight nodes nodes to keep also in a linear time $O(W)$. By skipping this step, we get an exact \acrshort{dd} construction.

In the last step, we connect 1-arc of all nodes layer $L_{|X|}$ to the $\mathbf{T}$ terminal and set the weight of $\mathbf{T}$ terminal to the minimal weight $r$-$\mathbf{T}$ path. This can be done in the time $O(W)$ as the size of the last layer is $O(W)$, and the update is a constant time operation. 

The time complexity of our algorithm 
is $O(\text{Blocks generation}+\text{Blocks sort}+|X|*\text{Layer generation})=O(n^3+n^3\log n^3+Wn^4)=O(Wn^4)$.

We have also tried different variable ordering methods and different node ordering used to impose the layer width constraint. However, the objective function values given by other orderings were higher than the orderings described in this section.

\section{\uppercase{Experimental evaluation}}
\label{sec:experiments}

Our solution approaches are implemented in \verb!C++! and 
run on an Intel$^{\text{\tiny{\textregistered}}}$ Core$^{\text{\tiny{\texttrademark}}}$ i5-8250U at 1.6GHz with at most 100MB of memory used. Our implementation is single threaded.

\subsection{Datasets}
We are using the datasets introduced in~\cite{ferdous_solving_2017} and in~\cite{blum_construct_2016}. 

The first dataset was introduced in the experimental evaluation of \acrshort{aco}~\cite{ferdous_solving_2017}. This dataset consists of 30 artificial instances and 15 real-life instances of DNA strings with the alphabet size $|\Sigma|=4$. This benchmark is divided into four subgroups. The Group 1 consists of 10 artificial instances of length $n\leq 200$, the Group 2 consists of 10 artificial instances of length $200\leq n\leq 400$, the Group 3 consists of 10 artificial instances of length $400\leq n \leq 600$. The last group called Real consists of 15 real-life instances of length $200\leq n \leq 600$.

\begin{table}[htb]
    \centering
    \caption{Results for the instances introduced by ~\cite{ferdous_solving_2017}. Columns DD$_W$ contain the objective function values obtained by using Decision diagrams with different restrictions on layer width. The remaining columns contain the objective function values reported in~\cite{blum_construct_2016} for the solution approaches considered in this paper. A cell is grey if an approach considered in~\cite{blum_construct_2016}  obtains a better objective function value than all restricted \acrshort{dd}$_W$.}
    \resizebox{0.7\textwidth}{!}{
    \begin{tabular}{c|S S S | S S S S S S}
\text{id}&\text{DD}$_{10}$&\text{DD}$_{100}$&\text{DD}$_{1000}$&\text{greedy}&\text{ACO}&\text{TRESEA}&\text{ILP}$_\text{comp}$&\text{heurILP}&\text{CMSA} \\
\multicolumn{10}{l}{Real}\\
\ccr 1&92&88&88&93&87\ccbb &86\ccbb &78\ccbb &85\ccbb &78.9\ccbb \\
2&160&160&157&160&155\ccb &153\ccb &139\ccb &150\ccb &140\ccb \\
\ccr 3&120&117&116&119&116&113\ccbb &104\ccbb &112\ccbb &104.7\ccbb \\
4&164&162&161&171&164&156\ccb &144\ccb &158\ccb &143.7\ccb \\
\ccr 5&169&169&168&172&171&166\ccbb &150\ccbb &161\ccbb &152.9\ccbb \\
6&149&146&144&153&145&143\ccb &128\ccb &139\ccb &127.6\ccb \\
\ccr 7&138&136&134&135&140&131\ccbb &121\ccbb &132\ccbb &122.7\ccbb \\
8&136&131&129&133&130&128\ccb &116\ccb &123\ccb &118.4\ccb \\
\ccr 9&144&141&143&149&146&142&131\ccbb &139\ccbb &130.7\ccbb \\
10&148&147&144&151&148&143\ccb &131\ccb &144&131.7\ccb \\
\ccr 11&125&125&126&124\ccbb &124\ccbb &120\ccbb &110\ccbb &122\ccbb &111.9\ccbb \\
12&141&141&142&143&137\ccb &138\ccb &126\ccb &136\ccb &127.5\ccb \\
\ccr 13&177&175&172&180&180&172&156\ccbb &171\ccbb &158.6\ccbb \\
14&151&151&152&150&147\ccb &146\ccb &134\ccb &147\ccb &134\ccb \\
\ccr 15&155&153&153&157&160&152\ccbb &139\ccbb &148\ccbb &141.7\ccbb \\
\multicolumn{10}{l}{Group 1}\\
\ccr 1&44&44&41&46&42&42&41&42&41\\
2&53&51&50&54&51&48\ccb &47\ccb &48\ccb &47\ccb \\
\ccr 3&60&57&55&60&55&55&52\ccbb &54\ccbb &52\ccbb \\
4&45&45&45&46&43&43\ccb &41\ccb &43\ccb &41\ccb \\
\ccr 5&45&44&43&44&43&41\ccbb &40\ccbb &43&40\ccbb \\
6&45&46&45&48&42&41\ccb &40\ccb &41\ccb &40\ccb \\
\ccr 7&62&62&60&64&60&59\ccbb &55\ccbb &59\ccbb &56\ccbb \\
8&47&44&44&47&47&45&43\ccb &44&43\ccb \\
\ccr 9&47&46&45&42&45&43\ccbb &42\ccbb &48&42\ccbb \\
10&61&58&58&63&59&58&54\ccb &58&54\ccb \\
%
%
\multicolumn{10}{l}{Group 2}\\
\ccr 1&116&112&109&118&113&111&98\ccbb &108\ccbb &101.2\ccbb \\
2&114&116&115&121&118&114\ccb &106\ccb &111\ccb &104.6\ccb \\
\ccr 3&115&107&110&114&111&107&97\ccbb &105\ccbb &97.1\ccbb \\
4&118&115&112&116&115&110\ccb &102\ccb &111\ccb &102.5\ccb \\
\ccr 5&138&134&129&132&132&127\ccbb &116\ccbb &125\ccbb &117.8\ccbb \\
6&106&107&105&107&105&102\ccb &93\ccb &101\ccb &95.4\ccb \\
\ccr 7&101&99&99&106&98\ccbb &95\ccbb &88\ccbb &96\ccbb &89\ccbb \\
8&116&117&113&122&118&114&104\ccb &116&105.2\ccb \\
\ccr 9&123&117&116&123&119&113\ccbb &104\ccbb &112\ccbb &104.9\ccbb \\
10&103&100&99&102&101&97\ccb &89\ccb &94\ccb &89.8\ccb \\
\multicolumn{10}{l}{Group 3}\\
\ccr 1&180&175&175&181&177&171\ccbb &155\ccbb &173\ccbb &157.9\ccbb \\
2&178&174&175&173\ccb &175&168\ccb &155\ccb &165\ccb &157.5\ccb \\
\ccr 3&194&188&188&195&187&185\ccbb &166\ccbb &180\ccbb &167.3\ccbb \\
4&188&184&182&191&184&179\ccb &159\ccb &171\ccb &161.8\ccb \\
\ccr 5&170&172&168&174&171&162\ccbb &150\ccbb &164\ccbb &151.1\ccbb \\
6&162&164&168&169&160&162\ccb &147\ccb &155\ccb &149.3\ccb \\
\ccr 7&169&167&167&171&167&159\ccbb &149\ccbb &160\ccbb &147.8\ccbb \\
8&175&178&171&185&175&170\ccb &151\ccb &166\ccb &154.2\ccb \\
\ccr 9&175&174&172&174&172&169\ccbb &158\ccbb &169\ccbb &155.3\ccbb \\
10&163&163&161&171&167&160\ccb &148\ccb &160\ccb &149\ccb  \\ 
\multicolumn{10}{l}{}\\ \multicolumn{10}{l}{}\\\multicolumn{10}{l}{}\\\multicolumn{10}{l}{}\\\multicolumn{10}{l}{}
    \end{tabular}}
    
    \label{tab:ACOresults}
\end{table}

 \clearpage

\begin{table}[htb]
    \centering
     \caption{
     Results for the instances introduced by ~\cite{blum_construct_2016}, averaged by instances of the same configuration. Columns DD$_W$ contain the objective function values obtained by using Decision diagrams with different restrictions on layer width. The remaining columns contain the objective function values reported in~\cite{blum_construct_2016} for the solution approaches considered in this paper. A cell is grey if an approach considered in~\cite{blum_construct_2016}  obtains a  better objective function value than all restricted \acrshort{dd}$_W$.
     }
    \resizebox{0.65\textwidth}{!}{
   \begin{tabular}{c|S S S | S S S S S}
\multicolumn{8}{l}{linear}\\
\text{n}&\text{DD}$_{10}$&\text{DD}$_{100}$&\text{DD}$_{1000}$&\text{greedy}&\text{TRESEA}&\text{ILP}$_\text{comp}$&\text{heurILP}&\text{CMSA} \\
\multicolumn{8}{l}{$\Sigma= 4$}\\
\ccr 200&73.6&72&70.5&75&68.7\ccbb &63.5\ccbb &69\ccbb &63.7\ccbb \\
400&132.2&129.6&128.3&133.4&126.1\ccb &115.7\ccb &124.3\ccb &116.4\ccb \\
\ccr 600&182.6&180.4&179&183.7&177.5\ccbb &162.2\ccbb &174.1\ccbb &162.9\ccbb \\
800&238.2&235.5&233.1&241.1&232.7\ccb &246.8&229.1\ccb &212.4\ccb \\
\ccr 1000&285.7&284.5&284&287&280.4\ccbb &n/a&277.2\ccbb &256.9\ccbb \\
1200&333.9&330.9&329&333.8&330.4&n/a&324.8\ccb &303.3\ccb \\
\ccr 1400&383.8&382.3&378.9&385.5&378.9&n/a&373.1\ccbb &351\ccbb \\
1600&429.4&424.4&422.8&432.3&427.1&n/a&416.7\ccb &400.6\ccb \\
\ccr 1800&476.4&473.6&471.1&477.4&474.2&n/a&464.4\ccbb &445.4\ccbb \\
2000&521.2&519.4&512.9&521.6&520.7&n/a&512.7\ccb &494\ccb \\
\multicolumn{8}{l}{$\Sigma= 12$}\\
\ccr 200&124.5&123.3&122.2&127.3&122.1\ccbb &119.2\ccbb &123&119.2\ccbb \\
400&227.5&225.8&224.1&228.9&223.5\ccb &208.9\ccb &215.7\ccb &209.4\ccb \\
\ccr 600&320.8&318.1&317.9&322.2&318.7&291\ccbb &296.2\ccbb &293.8\ccbb \\
800&412.8&409.3&405.3&411.4&408.1&368.7\ccb &373.9\ccb &373.2\ccb \\
\ccr 1000&497&494.1&491.5&499.2&494.9&453.4\ccbb &452\ccbb &449.9\ccbb \\
1200&586.2&581.2&578.5&586&585.6&536.6\ccb &542.4\ccb &531\ccb \\
\ccr 1400&663.7&658.1&657.9&666&664.6&684.1&653.3\ccbb &606.9\ccbb \\
1600&751.7&749.3&744.1&754.4&754.6&773.5&749.7&694.8\ccb \\
\ccr 1800&829.5&826.3&820.3&827.3&833&n/a&850.7&773.6\ccbb \\
2000&913.1&909.1&904.6&913.5&916.2&n/a&939.6&849.6\ccb \\
\multicolumn{8}{l}{$\Sigma= 20$}\\
\ccr 200&147.7&146.7&146.1&149.2&146.6&146.2&146.4&146.2\\
400&272.4&270.4&269.6&274.5&268.8\ccb &261.5\ccb &263.8\ccb &261.9\ccb \\
\ccr 600&387.9&385.4&383&389.2&383.5&362.3\ccbb &369.3\ccbb &366.6\ccbb \\
800&493&489.9&487.5&495.8&492.3&456.1\ccb &464.7\ccb &463.1\ccb \\
\ccr 1000&600&595.7&594&600.6&597.5&547.1\ccbb &562.5\ccbb &555\ccbb \\
1200&705.5&701.2&697.4&706.1&707.8&642.2\ccb &658.8\ccb &648.5\ccb \\
\ccr 1400&800.5&796.2&793.4&801.1&804&737.9\ccbb &745.7\ccbb &737.7\ccbb \\
1600&902.6&898.3&896.5&899.8&903.1&861.3\ccb &872.6\ccb &825.7\ccb \\
\ccr 1800&997.2&992.6&988.7&996.8&1000.1&1012.9&994.4&917.6\ccbb \\
2000&1095.8&1094.2&1086.4&1097.8&1102.6&1136&1120.7&1024.9\ccb \\
\multicolumn{8}{l}{skewed}\\
\text{n}&\text{DD}$_{10}$&\text{DD}$_{100}$&\text{DD}$_{1000}$&\text{greedy}&\text{TRESEA}&\text{ILP}$_\text{comp}$&\text{heurILP}&\text{CMSA} \\
\multicolumn{8}{l}{$\Sigma= 4$}\\
\ccr 200&67.8&66.2&64.2&68.7&62.8\ccbb &57.4\ccbb &64.6&57.5\ccbb \\
400&120.5&118.9&117.2&120.3&115\ccb &105.3\ccb &116.5\ccb &105.1\ccb \\
\ccr 600&169.5&167.6&165.4&170.6&163.8\ccbb &149.7\ccbb &165.2\ccbb &150.4\ccbb \\
800&220&216.2&214.7&219.8&213.3\ccb &224&211.7\ccb &196.5\ccb \\
\ccr 1000&265.8&265.8&263.2&268.6&261.7\ccbb &n/a&260.1\ccbb &240.2\ccbb \\
1200&315.9&311.5&309.3&313.8&309\ccb &n/a&302.1\ccb &285\ccb \\
\ccr 1400&362.3&357.4&354.3&358.7&352.2\ccbb &n/a&346\ccbb &327.6\ccbb \\
1600&403.7&401.4&397.1&400.9&397.9&n/a&394.4\ccb &376\ccb \\
\ccr 1800&444.1&438.4&438.5&440.6&442.1&n/a&431.7\ccbb &417.7\ccbb \\
2000&486.1&482.6&476.9&485&481.2&n/a&468.9\ccb &470.2\ccb \\
\multicolumn{8}{l}{$\Sigma= 12$}\\
\ccr 200&115.5&114.3&114&117.9&112.7\ccbb &108.5\ccbb &112.7\ccbb &108.6\ccbb \\
400&214.1&210.8&210.3&216.1&208.5\ccb &193.4\ccb &197.6\ccb &194.3\ccb \\
\ccr 600&305.1&304.3&300.4&304.8&301.7&274.5\ccbb &277.9\ccbb &277.2\ccbb \\
800&388.8&385.4&381.9&389.3&385.4&347\ccb &348.8\ccb &351\ccb \\
\ccr 1000&469.4&468.6&464&471.6&468.9&429.4\ccbb &428.7\ccbb &424.4\ccbb \\
1200&550.5&545.7&544.1&551.1&549.9&559.4&535\ccb &500.1\ccb \\
\ccr 1400&626.7&621.7&622.7&625.7&626.3&645.1&638.4&570\ccbb \\
1600&704.1&699.7&694.5&705.6&706.4&n/a&715.1&643.8\ccb \\
\ccr 1800&786.4&784.5&781.7&788.4&788.9&n/a&810.1&723.3\ccbb \\
2000&859.2&854.3&848.8&857.8&858&n/a&879.9&797.3\ccb \\
\multicolumn{8}{l}{$\Sigma= 20$}\\
\ccr 200&138.1&136.8&135.6&140.4&135.9&134.7\ccbb &136.5&134.7\ccbb \\
400&255&253&251.7&255.5&251.3\ccb &240.3\ccb &246.1\ccb &240.6\ccb \\
\ccr 600&364.3&362.6&360.3&366.8&361.2&336.1\ccbb &344.6\ccbb &341.1\ccbb \\
800&465.4&461.1&459.5&466.3&462.7&424.4\ccb &429.9\ccb &429.8\ccb \\
\ccr 1000&569.2&565.4&561.8&567.6&566.6&514.7\ccbb &525\ccbb &520.9\ccbb \\
1200&661.1&657.9&656&661.8&662.4&604.2\ccb &608.2\ccb &605.7\ccb \\
\ccr 1400&756.9&755&751.7&762.3&760.7&694.4\ccbb &696.1\ccbb &693.2\ccbb \\
1600&850&847.1&844.6&851.2&855.2&863.3&838.9\ccb &780.4\ccb \\
\ccr 1800&946.4&941.2&937.5&948.7&948.8&969.8&964.7&870.2\ccbb \\
2000&1033.9&1029.1&1025.6&1034.3&1037.7&1061.6&1066.6&967.1\ccb 
    \end{tabular}
    }
   
    \label{tab:CSMAdataset}
\end{table}

 \clearpage

\begin{figure}[htb]
    \centering
    \includegraphics[width=0.42\textwidth]{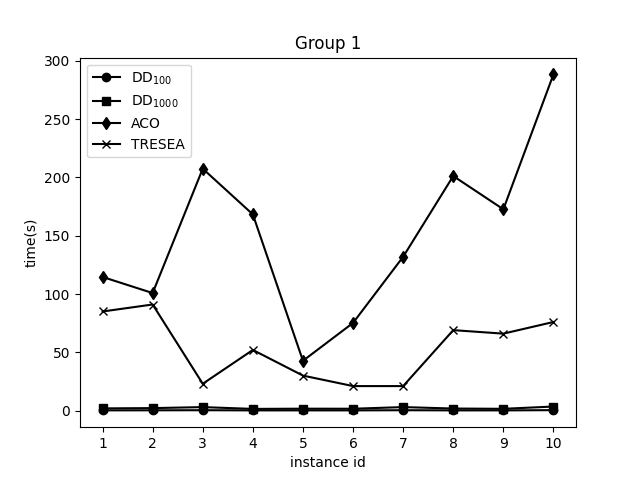}
    \includegraphics[width=0.42\textwidth]{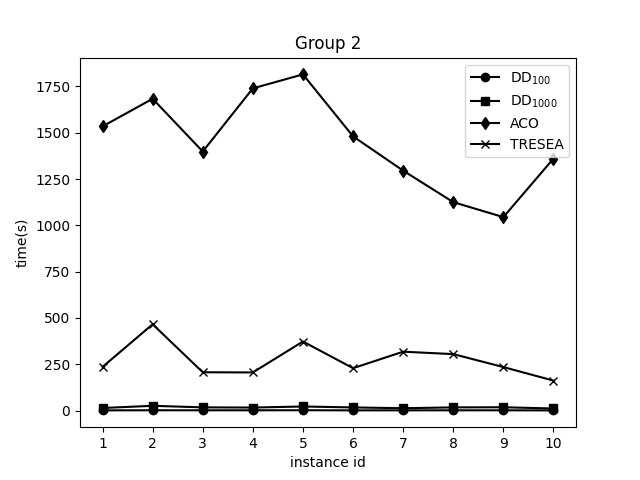}
    \includegraphics[width=0.42\textwidth]{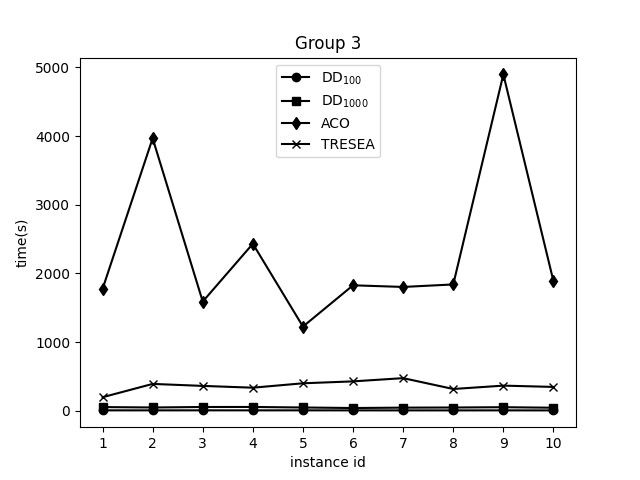}
    \includegraphics[width=0.42\textwidth]{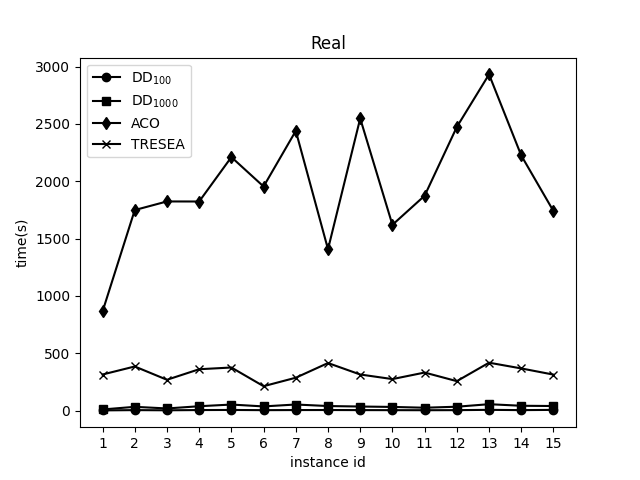}
    \caption{Runtime of the \acrshort{dd}$_W$ experiments with comparison to \acrshort{aco} and \acrshort{tresea}.}
    \label{fig:time_comparision_old}
\end{figure}

\begin{figure}[htb]
    \centering
    \includegraphics[width=0.42\textwidth]{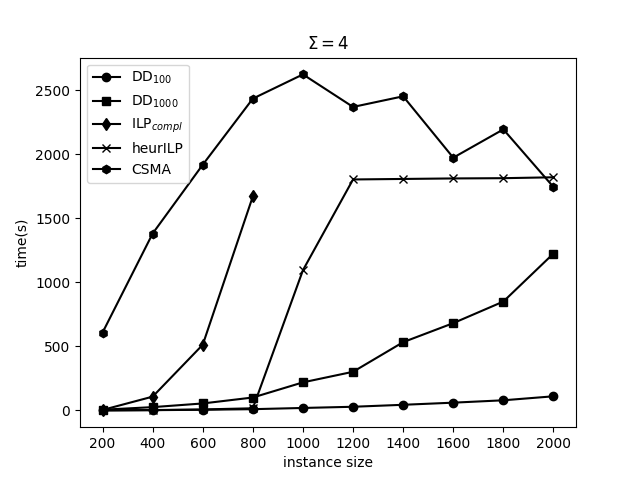}
    \includegraphics[width=0.42\textwidth]{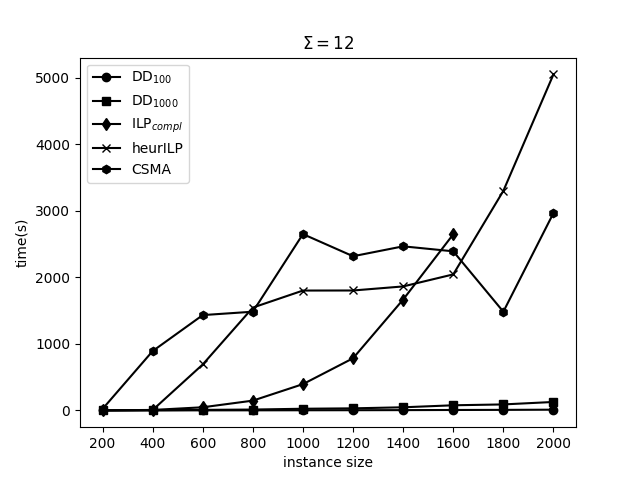}
    \includegraphics[width=0.42\textwidth]{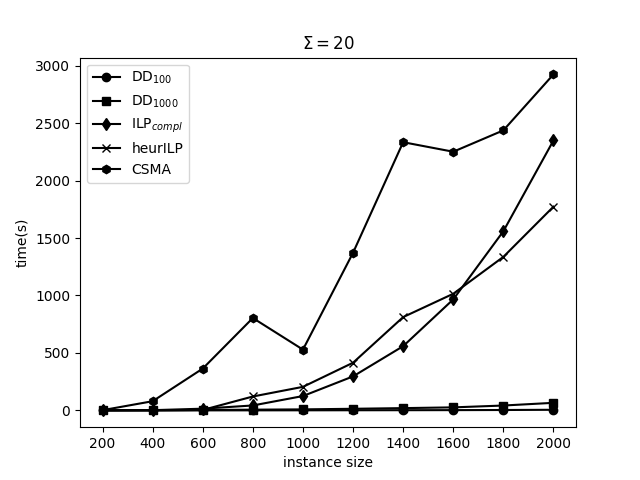}
    \caption{Runtime of \acrshort{dd}$_W$ experiments with comparison to the \acrshort{ilp} models on instance-set linea with $|\Sigma|=4,12,20$. The line for ILP$_{compl}$ ends sooner as it did not return any feasible solution after an hour for missing lines.}
    \label{fig:time_comparision}
\end{figure}


The dataset by~\cite{blum_construct_2016} consists of 20 randomly generated instances for each combination of length $n=200,400,\ldots,2000$ and alphabet size $|\Sigma|=4,12,20$. Ten of these instances are generated with the same probability for each symbol of the alphabet. These instances are called \emph{linear}. The remaining ten instances are called \emph{skewed}, and the probability of each symbol $l\in \Sigma$ with index $i\in\{1,\ldots,|\Sigma|\}$ is $i/\sum_{i=1}^{|\Sigma|} i$.

\subsection{Results}

We compare our results with the Greedy approach~\cite{chrobak_greedy_2004,he_novel_2007}, \acrfull{aco}~\cite{ferdous_solving_2017}, \acrfull{tresea}~\cite{blum_iterative_2014}, and various \acrshort{ilp} models (ILP$_{compl}$, heurILP, CMSA)~\cite{blum_mathematical_2015,blum_construct_2016}.
For these existing approaches we use results from~\cite{blum_construct_2016}, which were obtained using an Intel$^{\text{\tiny{\textregistered}}}$ Xeon$^{\text{\tiny{\texttrademark}}}$ X5660 CPU with 2 cores at 2.8 GHz and 48 GB of RAM.

 We have performed experiments with three restrictions of layer width of \acrshort{dd}s.
 \begin{itemize}
     \item \acrshort{dd}$_{10}$ has layer width at most 10, 
     \item \acrshort{dd}$_{100}$ has layer width at most 100, and
     \item \acrshort{dd}$_{1000}$ has  layer width at most 1000.
 \end{itemize}
 The \acrshort{dd}$_1$ with layer width 1 is the Greedy approach.
 We also did experiments with the exact \acrshort{dd}. The runtime of the exact \acrshort{dd} approach exceeds the time limit even for smallest instances and hence we do not report the results in our article.

The results of experiments on the dataset~\cite{ferdous_solving_2017} are presented in Table~\ref{tab:ACOresults}. Each line contains the objective function values for an instance obtained by different solution approaches.
The results for the second dataset~\cite{blum_construct_2016} are presented in Table~\ref{tab:CSMAdataset}. Each line contains the average objective function value obtained for all instances of the same configurations.

As we can see in Figure~\ref{fig:time_comparision_old}, the \acrshort{dd} approach runs faster than \acrshort{aco} and in most cases computes a better \acrshort{mcsp} objective function values, as can be seen in Table~\ref{tab:ACOresults}. 
The \acrshort{aco} runs at least $30\times$ slower on 93\% of instances. The runtime difference is over 2000 seconds for 20\% of instances.

As we can see in Tables~\ref{tab:ACOresults} and~\ref{tab:CSMAdataset}, \acrshort{tresea}  and the \acrshort{dd} approach behave similarly. For smaller instances of the \acrshort{mcsp}, \acrshort{tresea} outperforms the restricted \acrshort{dd}. However, for instances of the \acrshort{mcsp} on longer strings on alphabets $|\Sigma|\in\{12,20\}$ the \acrshort{dd} approach computes better objective function value on the \acrshort{mcsp} than the \acrshort{tresea} approach. As we can see in Figure~\ref{fig:time_comparision_old} the \acrshort{dd} even has a better runtime. \acrshort{tresea} is at least $6\times$ slower for 93\% of instances than the \acrshort{dd} approach.

The Greedy approach~\cite{chrobak_greedy_2004,he_novel_2007} is very similar to our approach of \acrshort{dd}$_1$. As we can see in both Tables~\ref{tab:ACOresults} and~\ref{tab:CSMAdataset} the restricted \acrshort{dd}s with bigger layer width gives better bounds.

The main advantage of the \acrshort{dd} is the runtime compared with basic the \acrshort{ilp} models. The runtime needed to get objective function values from the \acrshort{ilp} models grows fast with the instance size in comparison with the~\acrshort{dd}. This can be seen in both tables~\ref{tab:ACOresults} and~\ref{tab:CSMAdataset}. The \acrshort{dd} get worse bounds than \acrshort{ilp} in case of enough time. However, as the instances grow, the \acrshort{dd} gets better objective function values in a given time.
The \acrshort{dd} approach is $50\times$ faster for 50\% of instances then the CSMA approach.

\section{\uppercase{Conclusion}}
\label{sec:concl}

In this work, we developed a \acrshort{dp} formulation for the \acrshort{mcsp}. Based on this \acrshort{dp} formulation we designed an exact \acrshort{dd} solution approach and a heuristic solution approach using a restricted version of this \acrshort{dd}. The exact \acrshort{dd} is not suitable for solving large instances of the \acrshort{mcsp} as the runtime grows exponentially. The restricted \acrshort{dd} scales much better and can be used to heuristically solve the \acrshort{mcsp} with much larger input. We use the datasets from literature~\cite{blum_construct_2016,ferdous_solving_2017} to compare our \acrshort{dd} approach with  existing approaches such as Greedy approach~\cite{chrobak_greedy_2004,he_novel_2007}, \acrshort{aco}~\cite{ferdous_solving_2017}, \acrshort{tresea}~\cite{blum_iterative_2014} and various \acrshort{ilp} based models~\cite{blum_mathematical_2015,blum_construct_2016}.

In our experiments we have shown our approach is better than the Greedy approach and \acrshort{aco} both in the obtained objective function values and the runtime. Moreover, it obtains better objective function values for the \acrshort{mcsp} with bigger alphabets and longer strings than \acrshort{tresea}, ILP$_{compl}$ and heruILP in a given time. The CSMA obtains better results than our approach, however, it is much slower. 

A potential next research step could be improving the heuristic for variable ordering and node ordering of the restricted \acrshort{dd} together with a better heuristic for the layer width restriction. Another avenue for further research could be to hybridize the \acrshort{dd} model with the \acrshort{ilp} model. The \acrshort{dd} model runs fast and its simplicity allows to include any heuristic in any node to serve as the criterion to delete such node with its descendants and get the restricted \acrshort{dd} computing better objective function values. We hope that it could compete with the CSMA approach.

\subsection*{Future Work}

Here we present some additional ideas in which way future research could be directed.

\begin{enumerate}
    \item Current implementation builds the \acrshort{dd} breadth-first. We could change the order of processed nodes by using the priority queue instead processing whole layer at once. The node order can be for example 
    \begin{itemize}
        \item combination of the value of $r$-$n_i$ path and the number of yet uncovered symbols (A$^*$ approach),
        \item bounds by more restricted or relaxed \acrshort{dd} or by \acrshort{ilp} on subproblem defined by node $n_i$. The decision to use either \acrshort{dd} or \acrshort{ilp} for bounds can be made by Machine learning methods as Gonzales et al. used for a Maximum independent set problem~\cite{gonzalez_integrated_2020}.
    \end{itemize}
    \item Change the strategy of choosing node deleted during the restriction step. Our implementation uses a linear approach with the newly created layer's size, significantly speeding up the algorithm. We can change the speed for some more complex valuation of nodes.
    \item In our work, we only implemented a restricted \acrshort{dd} for the \acrshort{mcsp}, which gives us the upper bound of the optimal objective function value. For the lower bound, we can use a relaxed \acrshort{dd}~\cite{bergman_decision_2016}. The main advantage of such an approach would be that the relaxed \acrshort{dd} can be incrementally refined to get better solutions.
    We have already tried a few relaxations. However, we obtained a very weak lower bound on the \acrshort{mcsp}. We are interested in a suitable method for relaxation yielding better lower bounds in a reasonable time.
\end{enumerate}

\section*{\uppercase{Acknowledgements}}

This work was supported by the JKU Business School.

\bibliographystyle{plain}
\bibliography{references}

\end{document}